\begin{document}

\title{Generalized Clauser-Horne-Shimony-Holt inequalities maximally violated by higher dimensional systems}
\author{T. V\'ertesi}
\email{tvertesi@dtp.atomki.hu}
\affiliation{Institute of Nuclear Research of the Hungarian Academy of Sciences\\
H-4001 Debrecen, P.O. Box 51, Hungary}
\author{K.F. P\'al}
\email{kfpal@atomki.hu}
\affiliation{Institute of Nuclear Research of the Hungarian Academy of Sciences\\
H-4001 Debrecen, P.O. Box 51, Hungary}

\def\CC{\mathbb{C}}
\def\RR{\mathbb{R}}
\def\one{\leavevmode\hbox{\small1\normalsize\kern-.33em1}}
\newcommand*{\tr}{\mathsf{Tr}}
\newtheorem{theorem}{Theorem}[section]
\newtheorem{lemma}[theorem]{Lemma}

\date{\today}

\begin{abstract}
Imagine two parties, Alice and Bob who share an entangled quantum state.
A well-established result that if Alice performs two-outcome measurement on the portion of the state in her possession and Bob does likewise, they are able to produce correlations that cannot be reproduced by any classical theory. The allowed classical correlations can be expressed quantitatively by the Bell inequalities.
Here we propose new families of Bell inequalities, as a generalization of the Clauser-Horne-Shimony-Holt (CHSH) inequality and show that the maximum violation of these Bell inequalities allowed by quantum theory can not be attained by a bipartite quantum system having support on a qubit at each site.
\end{abstract}

\pacs{03.65.Ud, 03.67.-a}
\maketitle

\section{Introduction}\label{intro}

Bell inequalities are strict bounds on certain combinations of probabilities and correlation functions for measurements on multipartite systems \cite{Bell64,CHSH69}. These bounds apply for any local realistic theory. In the two-party two-outcome measurement scenario, the case we restrict our attention, Alice performs one of her $m_A$ measurements and Bob performs one of his $m_B$ measurements and then output respectively one of $k_A$ and $k_B$ different outcomes.

In the simplest nontrivial case with two measurement settings and two outcomes per party, there is (up to symmetries)
one nontrivial Bell inequality, the Clauser-Horne-Shimony-Holt (CHSH) inequality \cite{CHSH69}.
There exist generalizations of this CHSH inequality in various directions such as for arbitrary number of measurement
settings (e.g., \cite{BC90,Gisin99,PS01,CG04,AIIS05}), outcomes (e.g., \cite{CGLMP,KKCZO02,BG03,AGG05,CG04,Gisin07}) and for many parties as well (e.g., \cite{Mermin90,BK93,ZB02,LPZB04,SLK06}).
Here we shall focus on the two-party scenario with $m_A>2$ and $m_B>2$ measurement settings having binary outcomes, i.e., the case when $k_A=k_B=2$.

According to quantum mechanics composite systems can be entangled and may not obey a
local realistic description. The nonlocal nature appears evidently in the fact that
entangled states allow for violation of Bell inequalities. For instance, the singlet state
of two spin-$1/2$ particles shared by Alice and Bob violates the CHSH inequality by a multiplicative factor $\sqrt 2$,
but as Tsirelson showed \cite{Tsirelson80} this is the maximum amount of violation attainable on the basis of quantum mechanics. That is, by increasing the size of the local Hilbert space on Alice and Bob's side would not give any advantage in the violation of the CHSH inequalities. Then we may inquire whether there exist two-outcome two-party Bell inequalities at all which are maximally violated by higher than two-dimensional systems.

On one hand, it has been shown that both the generalized CHSH-type inequality for arbitrary $m$ settings of Braunstein and Caves \cite{BC90} and both the inequality of Gisin \cite{Gisin99} can be maximally violated by the use of a maximally entangled pair of qubits. The proof regarding the former inequality was provided by Wehner \cite{Wehner06} by the mean of analytic techniques borrowed from semidefinite programming \cite{BV04}. Further, the fact that the best quantum bound can be achieved by two qubits for Gisin's inequalities, was proved analytically recently by Tsirelson \cite{Tsirelson07}.

On the other hand, one can also consider situations where the number of parties are more than two. However, the theorem presented by Masanes \cite{Masanes05} (and the alternative proof presented in \cite{TV06} by Toner and Verstraete) implies that for an arbitrary number of parties, but for only two measurement settings per party ($m=2$) it suffices to perform projective measurements on systems having support on a qubit by each party, in order to obtain the maximal violation of the corresponding Bell inequalities.

The above results suggest the question, originally posed by Gill \cite{Gill} (see also \cite{Gisin07}):
Can all Bell inequalities with $k$ outcomes be maximally violated by choosing each party's local
Hilbert spaces to be $k$-dimensional and each measurement as a complete von Neumann measurement
(with $k$ orthogonal projectors) on pure states with minimal dimension?

Here we intend to give a definite answer to the first part of the question by providing explicit examples for two-outcome two-party CHSH-type inequalities whose maximal violation is not achieved by qubits. Note, that this question has already been answered in Refs.~\cite{Brunner,Perez} by proving the existence of such two-outcome Bell inequalities in the case of two and three parties, respectively. However, we prove the existence of this kind of inequalities for two parties by explicitly constructing them.

More specifically, in Section~\ref{rep} the connection between the vector construction of Tsirelson and the extremal correlations formed by measurements for systems of local qubits and also for general quantum systems are established. Then in Section~\ref{newfam} we construct CHSH-type Bell inequalities with asymmetric number of measurement settings $m_A>m_B$ on Alice and Bob's side, respectively. It is found numerically in Section~\ref{numeric}, that two of our Bell inequalities with the number of measurement settings $m_A=8$, $m_B=4$ and $m_A=12$, $m_B=4$ can be violated by a quantum state with four-dimensional local Hilbert spaces stronger (with the respective ratios $\sim1.036$ and $\sim1.015$) than if the parties were limited to use only local qubits. In Section~\ref{analytic} we also show analytically for a CHSH-type Bell inequality with $m_A=15$ and $m_B=6$ settings, derived by tailoring one of our family of Bell inequalities, that the ratio in question is definitely bigger than one ($\sim1.012$). We lend analytic results from discrete geometry in order to obtain this result. The paper concludes in Section~\ref{disc} discussing some open questions as well.

\section{Representation of joint correlations with dot products}\label{rep}

In this section we determine joint correlations which can be achieved classically, by the aid of quantum mechanics and by the restricted case that each party possesses a qubit. It is shown that the extremal values of a combination of these correlations can be obtained by the mean of a construction of dot products of Euclidean unit vectors in accordance with Tsirelson's theorem \cite{Tsirelson87,Tsirelson93}.

\subsection{Joint correlations}\label{joint}

\paragraph{Classical correlations:}

Let $a_i, b_j \in \{+1,-1\}$ for indices $1\le i \le m_A$ and $1\le j \le m_B$, where $m_A$ and $m_B$ denotes the number of measurements on Alice and Bob's side, respectively. We can write the expression

\begin{equation}
\label{BM}
  {\cal B}_M =\sum_{i=1}^{m_A}\sum_{j=1}^{m_B}{M_{ij} a_i b_j},
\end{equation}
where $M=(M_{ij})$ is a $m_A \times m_B$ matrix with real entries. A local hidden variables (LHV) model for a bipartite two-outcome scenario can be defined as a protocol \cite{Bell64}: Alice and Bob share a variable $\lambda \in \Lambda$, chosen according to a distribution $q$. Then Alice outputs $\alpha=A(\lambda,i)\in\{+1,-1\}$ and Bob outputs $\beta=B(\lambda,j)\in\{+1,-1\}$, where $i$ and $j$ labels the measurement settings on Alice and Bob's side, respectively. The joint correlation between Alice and Bob's outcomes is defined by averaging over the variable $\lambda$,
\begin{equation}
\langle \alpha_{i} \beta_j \rangle_{\text{LHV}} = \int{d\lambda q(\lambda)A(\lambda,i)B(\lambda,j)}.
\label{jointLHV}
\end{equation}
A generic Bell expression involving joint correlations can be written as
\begin{equation}
\label{BMLHV}
\langle {\cal B}_M \rangle_{\text{LHV}}=\sum_{i,j} M_{ij}\langle \alpha_i\beta_j\rangle_{\text{LHV}}.
\end{equation}
By maximizing $\langle {\cal B}_M \rangle_{\text{LHV}}$ we obtain a bound on the corresponding Bell inequality which has to be satisfied by any LHV model.

\paragraph{Generic quantum correlations:}

In this case we are interested in the quantum value of $\langle {\cal B}_M \rangle$. For this, let us define the quantum measurement model for the bipartite two-outcome measurement scenario: Alice and Bob share a pure state $|\psi\rangle$ in the Hilbert space ${\cal H}$ of finite or countable dimension. The observables $A_1,\ldots,A_{m_A}$ and $B_1,\ldots,B_{m_B}$ corresponding to each party's measurements, have eigenvalues $\pm 1$, that is $A_i^2=B_j^2=\one$. Note, that according to the Theorem~5.4 in Ref.~\cite{CHTW04}, in order to obtain the maximum quantum value of $\langle {\cal B}_M \rangle$, it is sufficient to carry out projective measurements (i.e., observables with the above properties) on system in a pure state. Then the joint correlation of Alice and Bob's measurement results are given by
\begin{equation}
\langle \alpha_i \beta_j \rangle_\text{QM} = \langle \psi| A_i \otimes B_j |\psi\rangle.
\label{jointQM}
\end{equation}

\paragraph{Quantum correlations between qubits:}
The dimension of the local Hilbert spaces has not been specified yet. Here we restrict our attention to the case when Alice and Bob, each has a spin-$1/2$ particle or qubit and the shared pure state is $|\psi\rangle \in \CC^2 \otimes \CC^2$. They each measure their own spin along a direction, specified by the orientation of the corresponding Stern-Gerlach apparatus, given by a unit vector from the sets $\{\vec a_i\}_{i=1}^{m_A} \in \RR^3$ and $\{\vec b_j\}_{j=1}^{m_B} \in \RR^3$. The corresponding observables are
\begin{subequations}
\label{observe}
\begin{align}
A_i &=\vec a_i \cdot \vec\sigma, \label{Ai}\\
B_j &=\vec b_j \cdot \vec\sigma^t,
\label{Bj}
\end{align}
\end{subequations}
where $\vec\sigma$ is the vector of the three Pauli matrices. The transposition (denoted by $t$) in Eq.~(\ref{Bj}) has been applied for later convenience.
The joint correlations arising in this case are denoted by $\langle \alpha_{i} \beta_j \rangle_{\text{2D}}$ and $\langle \alpha_{i} \beta_j \rangle_{\text{3D}}$ depending on whether the measurement directions $\{\vec a_i\}_{i=1}^{m_A}$ and $\{\vec b_j\}_{j=1}^{m_B}$ are chosen from co-planar settings, such as
\begin{subequations}
\label{directions}
\begin{align}
\vec a_i &= (\sin \theta_{ai},0,\cos\theta_{ai}),\\
\vec b_j &= (\sin \theta_{bj},0,\cos\theta_{bj}),
\end{align}
\end{subequations}
or are allowed to point to arbitrary points on the Poincar{\'e} sphere.

\subsection{Extremal values of correlation Bell inequalities}\label{extremal}

Let us define the following expression,
\begin{equation}
\label{BMd}
  \langle{\cal B}_M\rangle_d =\sum_{i=1}^{m_A}\sum_{j=1}^{m_B}{M_{ij}\vec{\vphantom{b} a}_i \vec b_j},
\end{equation}
where the unit vectors $\vec a_1,\ldots,\vec a_{m_A} \in \RR^d$ and $\vec b_1,\ldots,\vec b_{m_B} \in \RR^d$, that is, the vectors are inscribed in the unit sphere $S^{d-1}$. Without loss of generality let us assume $m_A\ge m_B$.

Next we can maximize Eq.~(\ref{BMd}),
\begin{equation}
\label{maxBMd}
\max\langle{\cal B}_M\rangle_{d}=\max_{\vec b_j \in S^{d-1}}\sum_{i=1}^{m_A}\left|\sum_{j=1}^{m_B}M_{ij}\vec b_j\right|
\end{equation}
by choosing
\begin{equation}
\label{ai}
\vec a_i=\frac{\sum_{j=1}^{m_B}M_{ij}\vec b_j}{|\sum_{j=1}^{m_B}M_{ij}\vec b_j|}
\end{equation}
for all $1\le i\le m_A$, that is each $\vec a_i$ is parallel to the linear combination of the $\vec b_j$ vectors it is multiplied with. Here we used the notation $|\vec v|$ for the Euclidean norm of a vector $\vec v\in \RR^d$.

\paragraph{Classical bound:}

Comparing Eq.~(\ref{BMd}) with Eq.~(\ref{BM}) one immediately arrives at
\begin{equation}
\label{BM1}
{\cal B}_M= \langle{\cal B}_M\rangle_1,
\end{equation}
and also one can write
\begin{equation}
\label{maxBMLHV}
\max\langle{\cal B}_M\rangle_\text{LHV}=\max \langle{\cal B}_M\rangle_1,
\end{equation}
by noticing that averaging can only lower the value of $\langle{\cal B}_M\rangle_\text{LHV}$.

\paragraph{Generic quantum bound:}

A generic correlation Bell expression for quantum correlations can be written as
$\langle{\cal B}_\text{M}\rangle_\text{QM}=\sum_{i=1}^{m_A}\sum_{j=1}^{m_B}M_{ij}\langle \alpha_i \beta_j\rangle_\text{QM},$ where $\langle \alpha_i \beta_j\rangle_\text{QM}$ is defined by Eq.~(\ref{jointQM}).
However, by the mean of Lemma~\ref{lemma:first} in the Appendix, one can write $\langle \alpha_i \beta_j\rangle_\text{QM}=\vec{\vphantom{b} a}_i \cdot \vec b_j$, where $\vec a_i, \vec b_j \in \RR^{m_A+m_B}$, no matter how large the dimension of the local Hilbert spaces of Alice and Bob is. Therefore using the above correspondence and Eq.~(\ref{BMd}) we obtain
\begin{equation}
\label{maxBMQM}
\max\langle{\cal B}_M\rangle_\text{QM}=\max\langle{\cal B}_M\rangle_{m_A+m_B}=\max\langle{\cal B}_M\rangle_{m_B}.
\end{equation}
In the last equality we used the fact that at the extremum Alice's each vector $\vec a_i$ in Eq.~(\ref{ai}) lies in the subspace spanned by Bob's vectors $\{\vec b_j\}_{j=1}^{m_B}$.
Let us mention that Lemma~\ref{lemma:second} indicates that the maximum value in Eq.~(\ref{maxBMQM}) achievable by an optimization strategy based on unit vectors can also be implemented by projective quantum measurements.

\paragraph{Qubit bound:}

Let us define $|a_i\rangle=A_i\otimes\one|\psi\rangle$ and $|b_j\rangle=\one\otimes B_j|\psi\rangle$, where the observables $A_i$ and $B_j$ valid for qubits are defined by Eqs.~(\ref{observe}), $|\psi\rangle\in\CC^2 \otimes\CC^2$. Then we have, independently of the state $|\psi\rangle$,
\begin{equation}
\left\|\sum_{j=1}^{m_B}M_{ij}|b_j\rangle\right\|=\left|\sum_{j=1}^{m_B}M_{ij}\vec b_j\right|,
\label{norms}
\end{equation}
where the unit vectors $\vec b_j\in\RR^3$ are given by Eq.~(\ref{Bj}) and the norm of a Hilbert space vector $|v\rangle$ is defined by $\left\||v\rangle\right\|\equiv \sqrt{\langle v|v\rangle}$.
The proof may go as follows: Let $|v\rangle =\one\otimes\vec v\sigma^t|\psi\rangle$, which is a linear map from $\RR^3$ to $\CC^2\otimes\CC^2$ by sending $\vec v$ to $|v\rangle$. Therefore by setting $|v\rangle = \sum_{j=1}^{m_B}M_{ij}|b_j\rangle$ it suffices to prove that $\left\||v\rangle\right\|=\left|\vec v\right|$. However, this formula immediately follows from the chain of equalities $\left\||v\rangle\right\|= \sqrt{\langle v|v\rangle} = \sqrt{\langle\psi|\one\otimes(\vec v\sigma^t)^2|\psi\rangle}=|\vec v|$, where we used the identity, ${\left(\vec v \vec \sigma^t\right)}^2 = |\vec v|^2\one$, valid due to the fact that the Pauli matrices anticommute and square to the identity. The Bell expression for qubits can be written in the form
\begin{equation}
\langle {\cal B}_M\rangle_\text{3D}=\sum_{i=1}^{m_A}\sum_{j=1}^{m_B}M_{ij}\langle \alpha_i \beta_j\rangle_\text{3D}.
\label{BM3D}
\end{equation}
Alternatively, if the measurement directions are confined to the plane one can write 2D instead of 3D in the subscript.
However, $\langle \alpha_i\beta_j\rangle_\text{3D}=\langle a_i|b_j\rangle$ and the formulae $\langle a_i|a_i\rangle=\langle b_j|b_j\rangle=1$ hold, owing to $A_i^2=B_j^2=\one$. Thus, we can further write Eq.~(\ref{BM3D}) to obtain an upper bound on it,
\begin{align}
\langle {\cal B}_M\rangle_\text{3D}=&\sum_{i=1}^{m_A}\langle a_i|\sum_{j=1}^{m_B}M_{ij}|b_j\rangle \nonumber\\ \le &\sum_{i=1}^{m_A}\left\||a_i\rangle\right\|\cdot \left\|\sum_{j=1}^{m_B}M_{ij}|b_j\rangle\right\|=\sum_{i=1}^{m_A}\left|\sum_{j=1}^{m_B}M_{ij}\vec b_j\right|,
\label{innerBM3D}
\end{align}
where we used the Cauchy-Schwarz inequality and then Eq.~(\ref{norms}) to obtain the last member. But the inequality in question can always be saturated by choosing $\vec a_i$ according to Eq.~(\ref{ai})
and by sharing the maximally entangled state $|\psi\rangle=|\Phi^+\rangle=1/\sqrt{2} \sum_{i=1}^{2}{|ii\rangle}$.
From this fact and from Eqs.~(\ref{maxBMd},\ref{BM3D},\ref{innerBM3D}) we gain the following relations
\begin{subequations}
\label{maxBM2D3D}
\begin{align}
\max\langle{\cal B}_M\rangle_\text{2D} &=\max\langle{\cal B}_M\rangle_\text{2},\\
\max\langle{\cal B}_M\rangle_\text{3D} &=\max\langle{\cal B}_M\rangle_\text{3},
\end{align}
\end{subequations}
where 2D signifies that the maximum can be attained by measurements performed on qubits with corresponding states ($|\Phi^+\rangle$) and observables (Eqs.~(\ref{observe},\ref{directions})) which can be written, in an appropriate basis, using only real numbers. In contrast, 3D denotes that the measurement settings can only be expressed using complex numbers.

In summary, in this section it has been shown in agreement with Tsirelson's work \cite{Tsirelson87}, that the highest achievable classical and quantum values can be obtained by maximizing the formula~(\ref{BMd}) with respect to the unit vectors $\vec b_j$ for $d=1$ and $d=m_B$, respectively. On the other hand, when Alice and Bob each possesses a qubit, one has $d=2$ and $d=3$ in formula~(\ref{BMd}) depending on whether the measurement settings can be taken from co-planar settings or not in order to obtain a maximal value for the Bell expression~(\ref{BM3D}). In the case of a qubit-qubit system the maximum quantum violation of any correlation Bell inequality~(\ref{BMLHV}) is achieved by a maximally entangled state.

\section{New families of Bell inequalities}\label{newfam}

Let us specify the $M_{ij}$ entries of the matrix $M$ in Eq.~(\ref{BM}) with the following inequalities
\begin{subequations}
\label{Bxyz}
\begin{align}
{\cal B}_{Xn} \equiv& \sum_{k_1,k_2,\ldots,k_{n-1}\in\{0,1\}}{a_{k_1 k_2 \dots k_{n-1}}}\nonumber\\ &\times \left((-1)^{k_1}b_1+(-1)^{k_2}b_2+ \dots +(-1)^{k_{n-1}}b_{n-1}+b_n\right) \nonumber \\ \le &\sum_{i=0}^n {\binom{n}{i}|n-2i|}= \left\lfloor\frac{n+1}{2}\right\rfloor \binom{n}{\left\lfloor\frac{n+1}{2}\right\rfloor} \label{Bx}\\
{\cal B}_{Yn} \equiv &\sum_{1\le i<j\le n}{a_{ij}^+(b_i+b_j)+a_{ij}^-(b_i-b_j)}\le n(n-1) \label{By}\\
{\cal B}_{Zn} \equiv &\sum_{1\le i<j\le n}{a_{ij}(b_i-b_j)}\le \left\lfloor\frac{n^2}{2}\right\rfloor,
\label{Bz}
\end{align}
\end{subequations}
where $a$ and $b$'s can pick up the values $\pm 1$ and $\lfloor x \rfloor$ denotes the largest integer smaller or equal to $x$. For given $b$'s the above expressions can be maximized by choosing $a$'s with the same signs as for the linear combinations of $b$'s arising in the parentheses. Then one may observe that by taking $a$'s this way the expressions ${\cal B}_{Xn}$ and ${\cal B}_{Yn}$ are invariant under any change of indices or signs of $b$'s. Therefore one may choose for instance $b_i=+1$ for all $1\le i \le n$ saturating the values of ${\cal B}_{Xn}$ and ${\cal B}_{Yn}$ and resulting in the bounds appearing on the left-hand side of Eqs.~(\ref{Bx}-\ref{By}). Further, one may arrive at the bound $\lfloor\frac{n^2}{2}\rfloor$ for ${\cal B}_{Zn}$ by noting that this expression is invariant under any permutation of the indices (such as for ${\cal B}_{Xn}$ and ${\cal B}_{Yn}$) and thus only the overall number of $+1$ and $-1$ values matter. Now suppose that $+1$ occurs $k$ times while $-1$ occurs $n-k$ times among the values of $b$'s, resulting in the sum $2(n-k)k$. This is maximal for $k= \lfloor n/2 \rfloor$ yielding the bound $\lfloor\frac{n^2}{2}\rfloor$ appearing in the left-hand side of Eq.~(\ref{Bz}).
Then averaging these expressions over the ensemble of the runs of the experiment and considering the relationship in Eq.~(\ref{maxBMLHV}), one obtains the following family of Bell inequalities (which we call as $Xn$, $Yn$, and $Zn$), satisfied by any LHV model:
\begin{subequations}
\label{BxyzLHV}
\begin{align}
\langle {\cal B}_{Xn}\rangle &\le \left\lfloor\frac{n+1}{2}\right\rfloor \binom{n}{\left\lfloor\frac{n+1}{2}\right\rfloor} \label{BxLHV}\\
\langle {\cal B}_{Yn}\rangle &\le n(n-1) \label{ByLHV}\\
\langle {\cal B}_{Zn}\rangle &\le \left\lfloor\frac{n^2}{2}\right\rfloor.
\label{BzLHV}
\end{align}
\end{subequations}

Note, that the inequalities $X_2$ and $Y_2$ are in fact the CHSH inequality \cite{CHSH69}, whereas $X_3$ is just a member of the elegant Bell inequalities \cite{BG03} denoted by $S_{3\times4}$ in Ref.~\cite{Gisin07}.
In the next section we focus our attention on the above family of Bell inequalities~(\ref{BxyzLHV}) by the particular value $n=4$ and will find numerically that the maximum quantum value on the Bell inequalities $X_4$ and $Y_4$ cannot be achieved by the use of a pair of qubits.

\section{Numerically obtained quantum bounds}\label{numeric}

We present detailed results concerning the general quantum bound and the quantum bound attainable with qubits on the Bell families (\ref{BxyzLHV}) by $n=4$. While the former bound can be obtained rigorously, we apply numerical techniques to calculate the qubit bound. Therefore, the main findings of this section concerning the power of using higher dimensional quantum systems over qubits in violating Bell inequalities will be based on numerically computed results.

In order to calculate these quantum and qubit bounds, we use their connections with the sum of norms of linear combinations of unit vectors in the Euclidean space, established in Section~\ref{extremal}. In particular, according to Eqs.~(\ref{maxBMQM},\ref{maxBM2D3D}), the general quantum bound, the quantum bound of qubits associated with real and complex numbers are equal to $\max \langle {\cal B}_M\rangle_d$ in Eq.~(\ref{maxBMd}) for $d=m_B$ and $d=2,3$, respectively. By comparing the definition in Eq.~(\ref{BMd}) with the expressions in Eqs.~(\ref{Bxyz}) the $M_{ij}$ coefficients entering in the objective function can be easily extracted.

Thus in this section our aim is to determine the value of
\begin{equation}
\max\langle{\cal B}_M\rangle_{d}=\max\sum_{i=1}^{m_A}\left|\sum_{j=1}^{m_B}M_{ij}\vec b_j\right|
\label{objective}
\end{equation}
with the constraints $\vec b_j \in S^{d-1}$ for $d=m_B, 2, 3$ with the particular $M$-matrices corresponding to the Bell coefficients of the inequalities $X_4$, $Y_4$, and $Z_4$. By a suitable parametrization of the sphere $S^{d-1}$ we can omit the constraints allowing us to use techniques of unconstrained optimization.  The numerical optimization has been performed actually by the aid of the Downhill Simplex method \cite{NM65}. To maximize the possibility that the extremal points found were actual global extremal points, each optimization task was started at least $1000$ times from randomly generated initial points. Furthermore, we verified by using the BARON code \cite{BARON} that the solutions in fact correspond to global maxima. This program is a general purpose solver for global optimization problems \cite{Neumaier04} based on branch and box reduction technique \cite{TS02}.

The results of this extensive optimization have been summarized in Table~\ref{table:XYZ}.
The three separate rows in the table represent results for the Bell inequalities $X_4$, $Y_4$ and $Z_4$ (represented in the first column). The next two columns from left denote the number of measurement settings on Alice and Bob's side, the ensuing four columns in turn show the bounds for the actual Bell inequalities by $d=1,2,3,4$ signifying the bounds achievable by LHV, by qubits associated with real and complex numbers, and with general quantum systems. The last separate column gives the ratio of the violation of the Bell inequalities obtainable by general quantum systems ($d=4$) relative to qubits ($d=3$).

Recall that the LHV bounds ($d=1$) are given by Eqs.~(\ref{Bxyz}) in Sec.~\ref{newfam} and we stress that the general quantum bounds ($d=4$) can also be obtained analytically either by using geometrical considerations or techniques from semidefinite programming such as discussed in Refs.~\cite{Wehner06} and \cite{NPA07}. Hence, in essence only the cases $d=2$ and $d=3$ necessitate numerical optimization. Moreover, as it will turn out in the next section, the Bell inequality $Z_4$ can be treated by analytical means for each cases $d=1,2,3,4$.


\begin{table*}
\begin{center}
\begin{tabular} {|l||l|l||l|l|l|l||l|} \hline
$ $ & $ $ & $ $ & $d=1$ & $d=2$ & $d=3$ & $d=4$ & $(d=4)/(d=3)$\\
$ $ & $m_A$ & $m_B$ & $\text{LHV}$ & $\text{2D}$ & $\text{3D}$ & $\text{QM}$ & $\text{QM/3D}$\\ \hline\hline

$\max \langle {\cal B}_{X4}\rangle_d$ & $8$ & $4$ & $12$ & $14.81$ & $15.45$ & $16$ & $1.036$\\ \hline
$\max \langle {\cal B}_{Y4}\rangle_d$ & $12$ & $4$ & $12$ & $16.109$ & $16.726$ & $16.976$ & $1.015$\\ \hline
$\max \langle {\cal B}_{Z4}\rangle_d$ & $6$ & $4$ & $8$ & $9.657$ & $9.798$ & $9.798$ & $1$\\ \hline
\end{tabular}

\end{center}
\caption{
Classical ($d=1$), qubit associated with real and complex numbers ($d=2$ and $d=3$), and general quantum bounds ($d=4$) for the Bell inequalities $X_4$,$Y_4$ and $Z_4$. The ratio of the bounds for general quantum systems relative to qubits associated with complex numbers are shown in the last column for the three respective Bell inequalities.}
\label{table:XYZ}
\end{table*}

Regarding the case $d=4$ the optimal configuration by the Bell inequalities $X_4$ and $Y_4$ is the one where the vectors $\vec b$ are mutually orthogonal, i.e., lying on the coordinate axes spanning the four-dimensional Euclidean space. Geometrically, the maximization specified by Eq.~(\ref{objective}) corresponds to maximizing the sum of the lengths of all space diagonals (for $d=4$ quadragonals) and of all face diagonals of a four-dimensional rhombohedron in cases of $X_4$ and $Y_4$, respectively. In both cases the optimum shape is the $4$-cube. For $Z_4$ the optimum is reached by $\vec b$'s forming the vertices of the regular tetrahedron inscribed in the unit sphere $S^2$.

The following conclusions can be made from the numbers presented in Table~\ref{table:XYZ}:
Numerically the Bell inequalities $X_4$ and $Y_4$ provide examples that the general bipartite quantum systems ($d=4$) outperform the bound pertaining to the qubit case ($d=3$). Further, $X_4$ is more powerful than $Y_4$ yielding the ratio $\sim 1.036$ of the quantum per qubit bounds. Moreover, by each of the three Bell inequalities measurements on qubits needing complex numbers ($d=3$) give better performance than what can be obtained with measurements on qubits ($d=2$) requiring real numbers. Note, that the elegant construction of Bechmann-Pasquinucci and Gisin \cite{BG03} has the same property, i.e., requires complex numbers to obtain maximal violation of their Bell inequalities.
On the other hand, we haven't found in the literature a Bell inequality which would have been proved to possess the former property, i.e., the quantum per qubit ratio is bigger than 1. However, this does not mean that no such Bell inequalities exist in the literature. One such example is the inequality constructed by Fishburn and Reeds \cite{FR94} which has 20 measurement settings on each side, and it is the simplest known explicit example with the property that the quantum per classical ratio is larger than $\sqrt 2$. This value, 1.4285, is analytically known, and it can be achieved with $d=5$. On the other hand, numerically, we have got the value of 1.3519 (smaller than $\sqrt 2$) for the qubit per classical ratio, entailing the $\sim 1.056$ quantum per qubit ratio.
Due to the many parameters involved it required a much longer calculation than the previous examples, but we are fairly confident that the result is correct. We managed to reproduce the same way the known $d=5$ value as well, although that problem has even more parameters.
Actually in Ref.~\cite{FR94} a family of Bell inequalities were constructed. Although the member of this family with 12 measurement settings on each side has a smaller than $\sqrt 2$ quantum per classical ratio (1.4), it still has the property that it can not be maximally violated with qubits. We have got 1.3485 for the qubit per classical ratio.

To conclude, computations show that in order to maximally violate the Bell inequalities $X_4$ and $Y_4$ one needs to resort to systems of higher than two-dimensional local Hilbert spaces. However, in the next section we will see that the family of Bell inequality $Z_n$ for higher $n$ values (especially for $n=6$) proves to be suitable for demonstrating analytically as well the advantage of higher dimensions systems over qubits.

\section{Analytically derived quantum bounds}\label{analytic}

Let us observe that
\begin{equation}
\label{End}
E(n,d)\equiv \max\sum_{1\le i,j\le n}{|\vec b_i-\vec b_j|}=\max\langle {\cal B}_{Zn}\rangle_d,
\end{equation}
with the unit vectors $\vec b$ in $\RR^d$, is an extremal problem in discrete geometry \cite{AS03}: Dispose $n$ points on the unit sphere of $\RR^d$ on such a way that the sum of distances between all pairs of points would attain its maximum. This problem was originally posed by Fejes T\'oth \cite{Fejes56} some fifty years ago and settled the problem completely for the $d=2$ planar case and also for the case $d=n-1$. There are a number of other instances of $n,d$ where the value of $E(n,d)$ is known either due to exact, analytical treatment \cite{Fejes56,CK07,Alexander72} or by the mean of extensive numerical calculations \cite{RSZ95}. However, for now on we devote our attention only to exact results.

First let us observe that in Eq.~(\ref{End}) in the sum only differences appear entailing $E(n,n-1)=E(n,n)$, that is, it suffices to consider $(n-1)$-dimensional Euclidean space to maximize the sum of distances between $n$ points. This may be proved by noticing that the vectors $\{\vec b_i - \vec b_j\}_{1\le i <j\le n}^n$ span an $(n-1)$-dimensional subspace. By projecting the unit vectors $\vec b$'s on this subspace, as a result these projected vectors are shortened but still have equal length. Then the sum of distances can always be increased by stretching these vectors to unit length in this $(n-1)$-dimensional subspace.

The case $\RR^2$, as already mentioned, is completely solved analytically \cite{Fejes56}, yielding $E(n,2)=n \cot [\pi/(2n)]$, which tends to $2n^2/\pi$ for large $n$ values and the corresponding optimal configurations are the regular $n$-gons.

In case $\RR^3$ there are exact results only either for a small number of points or in the limit of infinitely many points. Especially, Cohn and Kumar \cite{CK07} defined a configuration to be universally optimal if it maximizes the energy expression $\sum_{1\le i<j \le n}f(|\vec b_i-\vec b_j|)$ for all completely monotonic $f$. These configurations are a subset of our optimal configurations, thus of special interest for us. Interestingly, Cohn and Kumar were able to find universal optimality only for certain special arrangements (a list of them can be found in Table~1 of Ref.~\cite{CK07}). In three dimensions, the examples comprise the vertices of a regular tetrahedron, octahedron, or icosahedron giving the respective values $E(4,3)=4 \sqrt 6$, $E(6,3)=6(1+2\sqrt 2)$, and $E(12,3)=12( 1 + \sqrt{5(5+2\sqrt 5)})$. On the other hand, in the asymptotic limit, $E(n,3)=\lim_{n\rightarrow\infty}2n^2/3$. This result is due to Alexander \cite{Alexander72}.

In higher dimensions much less is known, but there exist particular sets of solutions, such as $E(2n,n)=2n(1+(n-1)\sqrt 2)$ with the regular cross polytope as the corresponding optimal configuration \cite{KY97}, and $E(n,n-1)=n\sqrt{n(n-1)/2}$ with the regular simplex \cite{Fejes56} as the optimal configuration.


\begin{table*}
\begin{center}
\begin{tabular} {|l||l|l||l|l|l||l|} \hline
$E(n,d)=$ & $ $ & $ $ & $(d=2)/(d=1)$ & $(d=3)/(d=1)$ & $(d=n-1)/(d=1)$ & $(d=n-1)/(d=3)$ \\
$\max \langle {\cal B}_{Zn}\rangle_d$ & $m_A$ & $m_B$ & $\text{2D}/\text{LHV}$ & $\text{3D}/\text{LHV}$ & $\text{QM}/\text{LHV}$ & $\text{QM}/\text{3D}$ \\ \hline\hline

$n=4$ & $6$ & $4$ & $1.207$ & $1.225$ & $1.225$ & $1$\\ \hline
$n=6$ & $15$ & $6$ & $1.244$ & $1.276$ & $1.291$ & $1.0116$\\ \hline
$n\rightarrow\infty$ & $\binom{n}{2}$ & $n$ & $4/\pi=1.273$ & $4/3=1.333$ & $\sqrt 2 = 1.414$ & $1.0607$\\ \hline
\end{tabular}

\end{center}
\caption{
Analytically obtained, exact results rounded up to three decimals for the $Z_n$ family of Bell inequalities for the values $n=4,6$ and $n\rightarrow\infty$.}
\label{table:Z}
\end{table*}

In Table~\ref{table:Z} we collected analytical values of $E(n,d)=\max\langle {\cal B}_{Zn}\rangle_d$ for $n=4,6$ and for $n\rightarrow\infty$ by the dimensions $d=1,2,3$ and $d=n-1$. The second and third columns, such as in Table~\ref{table:XYZ}, present the number of measurement settings per party, whereas the next three columns show ratios of the bounds relative to the LHV bound achievable by qubits with real numbers ($d=2$), by qubits with complex numbers ($d=3$) and by quantum system disregarding the size of the used Hilbert spaces ($d=n-1$). Finally, the last separate column provides us with the ratio of the violation of the Bell inequalities $Z_n$ obtainable by general quantum systems ($d=n-1$) relative to qubits ($d=3$). Actually, the case $n=4$ is the one discussed before in Sec.~\ref{numeric}, showing that this case can indeed be treated purely analytically.

However, for us the case $n=6$ has particular interest. In the last column for $n=6$ we can read off the value $(\sqrt {120} -\sqrt {15})/7\sim 1.0116$ showing conclusively that for the Bell inequality $Z_6$ the general quantum bound is higher than the quantum bound corresponding to qubits. This result provides us with an answer for Gill's question \cite{Gill}, demonstrating that even in the case of Bell inequalities for two-outcome measurements ($k=2$) and for two parties one sometimes needs to choose local Hilbert spaces of dimension more than $k=2$ to obtain a maximal quantum violation.

When $n\rightarrow \infty$ we may observe that qubits associated with real numbers in the Bell inequalities $Z_n$ are just as powerful as in the Bell inequalities of Gisin \cite{Gisin99} giving the ratio $4/\pi$ for the violation. However in our situation qubits associated with complex numbers can even do better, and finally  the general quantum bound corresponds to the ratio $\sqrt 2$ such as in the CHSH inequality \cite{CHSH69}. Also note that for $n\rightarrow\infty$ the general quantum bound on $Z_n$ outperforms the qubit bound by $\sim6\%$, a value which is larger than $\sim 3.4\%$ corresponding to the Bell inequality $X_4$. In this respect, however, inequality $Z_n$ for $n\rightarrow\infty$ is apparently not amenable to an experimental test in contrast to the inequality $X_4$.

For completeness, we give some details about the observables and states associated with the measurement process to obtain the highest quantum value of the particular Bell inequality $Z_6$. Due to the results of Fejes T\'oth \cite{Fejes56}, the set $\{\vec b_k\}_{k=1}^n$  which attains $E(n-1,n)=\max\langle {\cal B}_{Zn}\rangle_\text{QM}$ corresponds to the vertices of the regular simplex in dimension $n-1$. Given these vectors $\vec b$ Alice forms her vectors $\vec a_{ij}$, $1\le i<j\le n$ by using the formula in Eq.~(\ref{ai}). Then by the sense of Tsirelson's theorem \cite{Tsirelson87} (alternatively Lemma~\ref{lemma:second}), the required local dimension for $n=6$ is $d=2^{\lfloor (n-1)/2\rfloor} =4$ and the corresponding state is $|\Phi^+\rangle = \frac{1}{2}\sum_{i=1}^4|ii\rangle$. Using five mutually anti-commuting operators on $\CC^4$, i.e., the Dirac matrices, it is straightforward to build up the observables $A_{ij}$ and $B_k$:
\begin{subequations}
\label{AB}
\begin{align}
A_{ij} &= \sum_{l=1}^{5}{a_{ij}^{(l)}\gamma_l},\hspace{5 mm} 1\le i,j \le n,\\
B_k &= \sum_{l=1}^{5}{b_k^{(l)}\gamma_l^t},\hspace{5 mm} 1\le k \le n,
\end{align}
\end{subequations}
where $n=6$, $a_{ij}^{(l)}$ ($b_k^{(l)}$) are the components of the vectors $\vec a_{ij}$ ($\vec b_k$) and the five gamma matrices are given by $\{\gamma_l\}_{l=1}^5$ as a special case of the formuale~(\ref{gamma}) in the proof of Lemma~\ref{lemma:second}, as follows
\begin{subequations}
\label{gammaspec}
\begin{align}
\gamma_1 & = \sigma_x\otimes\one \\
\gamma_2 & = \sigma_y\otimes\one \\
\gamma_3 & = \sigma_z\otimes\sigma_x \\
\gamma_4 & = \sigma_z\otimes\sigma_y \\
\gamma_5 & = \sigma_z\otimes\sigma_z.
\end{align}
\end{subequations}
Note, that the observables attaining the general quantum optimum in the case of inequalities $X_4$ and $Y_4$ can also be formed by Eqs.~(\ref{AB}) (and with the same maximally entangled state), but with mutually orthogonal $\vec b$-vectors (and with the corresponding $\vec a$-vectors in Eq.~(\ref{ai})) and by setting $n=4$ in Eqs.~(\ref{AB}).

\section{Discussions}\label{disc}

There are several issues related to the problem in question but not considered in the present treatment.
For instance, it would be interesting to improve further the quantum per qubit ratio from the value $\sim 1.0607$ achieved by the Bell inequality $Z_n$ for $n\rightarrow \infty$. Actually, the quantum per qubit ratio is closely related to the amount of noise which can be mixed to a maximally entangled state, which ruins the nonlocal correlations so that they could be reproduced by a maximally entangled qubit as well.
In particular, a better value (the lower bound $\sim 1.106$) for the quantum per qubit ratio corresponds to the Reeds-Davie Bell Inequality \cite{DavieReeds}. The actual value of this lower bound follows from Ref.~\cite{Brunner} where the gap between the best upper bound on the three-dimensional Grothendieck constant \cite{DavieReeds} and the lower bound on the Grothendieck constant \cite{Krivine79} has been exploited to show the existence of quantum correlations of two outcomes not achievable with qubits.
Ref.~\cite{Brunner} has also given an important motivation for this work and to subsequent research in this direction. In this work the notion of dimension witnesses has been introduced, which makes it possible to measure dimensions of Hilbert spaces. The inequalities found in the present paper may be regarded as examples of dimension witnesses.

In the present treatment we dealt only with correlation Bell inequalities of the CHSH-type without entering marginals in the Bell expressions. This enabled us to use Tsirelson's vectorial formalism considerably simplifying the optimization problem. When marginals are involved one may resort to semidefinite programming tools to obtain upper bounds on the quantum violation of Bell inequalities \cite{NPA07}. Note, however, that attaining tight bounds especially in the local qubit cases (for two-outcome measurements) is still a difficult problem and probably only tractable by numerical optimization methods with resources quickly increasing in the number of measurement settings. Recently, we determined \cite{PV07} numerically the maximum quantum violation in higher dimensional Hilbert spaces of over 100 bipartite Bell inequalities with marginals with up to five measurement settings per party.

The question also naturally arises, if one could somehow establish tight quantum bounds not only for qubits on CHSH-type Bell inequalities, the case which was treated in the present paper, but for higher dimensional quantum systems as well.
We also did not consider Bell inequalities with more than two-outcome ($k>2$) measurements. Namely, one may ask, if it were possible to construct $k$-outcome Bell inequalities with $k>2$, where more than $k$-dimensional local Hilbert spaces are required to attain the maximal quantum violation. Note, that in the case of Collins et al. inequalities \cite{CGLMP} having $k$ outputs per measurement, the maximum quantum value can be already attained on $k$-dimensional local Hilbert spaces up to $k<9$. This fact has been established by Navascu\'es et al. \cite{NPA07}.

Finally, an open question which concerns our present work more closely, whether there exist correlation Bell inequalities which improve on the family of Bell inequalities presented here in the sense that Alice and Bob need to perform less measurements in order to achieve a stronger violation of the LHV bound by qubits than by general quantum systems. For this purpose it would be interesting to present a Bell inequality with less than $m_A+m_B=8+4=12$ settings, the case corresponding to our $X_4$ inequality. On the other hand, it would be useful from the viewpoint of realization to find inequalities where the quantum per qubit ratio could be increased for few settings above $\sim 1.036$, the value we found numerically for the $X_4$ inequality. We wish to address some of these questions in a future paper.

\acknowledgements

T.V. is indebted to Antonio Ac\'in for valuable discussions. The authors are indebted to an anonymous referee for drawing our attention to Bell inequalities appearing in Refs.~\cite{FR94,DavieReeds} relevant to this study and for the helpful comments.

\appendix

\section{Tsirelson's theorem on quantum realizable correlations}\label{app}

We reformulate Tsirelson's original problem about the realization of quantum correlations with dot products by an explicit construction of the representation of the Clifford Algebra. This treatment closely follows the proof of Lemma~2 presented in Ref.~\cite{AGT06}.

Suppose that Alice and Bob measure observables $A_i, i \in \{1,\ldots,m_A\}$ and $B_j, j \in \{1,\ldots,m_B\}$ on a pure quantum state $|\psi\rangle \in \CC^{D}\otimes\CC^{D}$, where $D$ denotes some finite dimension. Assume the condition that $A_i^2 = \one$, $B_j^2 = \one$, which should be satisfied by projective measurements having binary outcomes. Define the joint correlations $\langle \alpha_i \beta_j\rangle_{\psi} = \langle \psi|A_i\otimes B_j|\psi\rangle$. The following Lemma is borrowed from Refs.~\cite{AGT06,RT07} with a small modification:

\begin{lemma}
We can associate real unit vectors $\vec a_i\in\RR^{m_A+m_B}$ with $A_i$, $\vec b_j\in\RR^{m_A+m_B}$ with $B_j$ such that $\langle \alpha_i \beta_j\rangle_{\psi} = \vec{\vphantom{b} a}_i \cdot \vec b_j$, for all $1\le i\le m_A$ and $1\le j\le m_B$.
\label{lemma:first}
\end{lemma}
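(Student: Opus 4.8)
The plan is to realize the joint correlations $\langle\alpha_i\beta_j\rangle_\psi=\langle\psi|A_i\otimes B_j|\psi\rangle$ as dot products of explicit real unit vectors built from a representation of a Clifford algebra. First I would observe that the observables $A_i$ on Alice's side can be replaced, without changing any of the correlations with Bob, by operators of the form $\Gamma(\vec{x}_i)=\sum_k x_i^{(k)}\gamma_k$ acting on an auxiliary space, where $\vec{x}_i\in\RR^{m_A+m_B}$ are arbitrary unit vectors and the $\gamma_k$ are Hermitian matrices satisfying the Clifford relations $\{\gamma_k,\gamma_l\}=2\delta_{kl}\one$. Such matrices exist on $\CC^{2^{\lfloor(m_A+m_B)/2\rfloor}}$; the key algebraic identity is $\Gamma(\vec x)\Gamma(\vec y)+\Gamma(\vec y)\Gamma(\vec x)=2(\vec x\cdot\vec y)\one$, so in particular $\Gamma(\vec x)^2=|\vec x|^2\one=\one$ for unit $\vec x$, matching $A_i^2=\one$. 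The same construction applies on Bob's side with the transposed matrices $\gamma_k^t$.

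The heart of the argument is then to pick the right vectors. I would set $|a_i\rangle=(A_i\otimes\one)|\psi\rangle$ and $|b_j\rangle=(\one\otimes B_j)|\psi\rangle$ in the \emph{original} Hilbert space, note these are unit vectors because $A_i^2=B_j^2=\one$, and observe that $\langle a_i|b_j\rangle=\langle\psi|A_i\otimes B_j|\psi\rangle=\langle\alpha_i\beta_j\rangle_\psi$, which is real since it is an eigenvalue-weighted average of $\pm1$ outcomes (equivalently, $A_i\otimes B_j$ is Hermitian and its expectation is real). So the $2(m_A+m_B)$ Hilbert-space vectors $\{|a_i\rangle\},\{|b_j\rangle\}$ already reproduce the correlations via a complex inner product; the remaining task is to pass from $\CC^N$ with a complex inner product to $\RR^{m_A+m_B}$ with the real dot product. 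For that I would apply the standard real-embedding of $\CC^N\cong\RR^{2N}$ (sending $|v\rangle\mapsto(\mathrm{Re}\,v,\mathrm{Im}\,v)$), under which $\mathrm{Re}\langle u|v\rangle$ becomes the real dot product; since every $\langle a_i|b_j\rangle$ is already real, this preserves exactly the quantities we care about. Finally, the $2(m_A+m_B)$ resulting real unit vectors span a subspace of dimension at most $m_A+m_B$ — because the Gram matrix of the cross-inner-products has rank at most $\min(m_A,m_B)$ and an orthonormalization/dimension-count argument (or directly invoking Tsirelson's theorem, as the excerpt does) lets one re-express everything inside $\RR^{m_A+m_B}$ — so after an isometry we obtain $\vec a_i,\vec b_j\in\RR^{m_A+m_B}$ with $\vec a_i\cdot\vec b_j=\langle\alpha_i\beta_j\rangle_\psi$ and $|\vec a_i|=|\vec b_j|=1$.

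The main obstacle I anticipate is the bookkeeping that pins the dimension down to exactly $m_A+m_B$ rather than something larger like $2^{\lfloor(m_A+m_B)/2\rfloor}$ or $2N$: one has to argue that although the ambient representation space is exponentially large, the vectors $|a_i\rangle$ and $|b_j\rangle$ (or their real images) live in a low-dimensional subspace determined only by the number of settings, and that one may freely rotate within that subspace. This is precisely the content of Tsirelson's theorem and is the step where following Ref.~\cite{AGT06} (as the excerpt announces) pays off; the Clifford-algebra construction above is the explicit witness that the bound $m_A+m_B$ is actually achievable. The algebraic identities for $\Gamma(\vec x)$ and the reality of $\langle\alpha_i\beta_j\rangle_\psi$ are routine; the dimension count is where care is needed.
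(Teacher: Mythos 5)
Your proposal is correct and follows essentially the same route as the paper: define $|a_i\rangle=(A_i\otimes\one)|\psi\rangle$ and $|b_j\rangle=(\one\otimes B_j)|\psi\rangle$, note their unit norm and the reality of $\langle a_i|b_j\rangle$, embed $\CC^N$ into $\RR^{2N}$ so that the (real) inner products become dot products, and then restrict to the span of the vectors involved. Two cosmetic points: the opening Clifford-algebra paragraph belongs to the converse statement (Lemma~\ref{lemma:second}) rather than to this lemma, and there are $m_A+m_B$ vectors (not $2(m_A+m_B)$), which is exactly why their span has dimension at most $m_A+m_B$ without any Gram-rank argument.
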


\begin{proof}
Let $|a_i\rangle = A_i \otimes \one |\psi\rangle$ and $|b_j\rangle = \one \otimes B_j |\psi\rangle$. Then $\langle \alpha_i \beta_j \rangle_{\psi} = \langle a_i|b_j \rangle$ and owing to $A_{i}^2 = B_j^2 = \one$, $\langle a_{i}|a_{i} \rangle = \langle b_{i}|b_{i} \rangle = 1$. Let $a_i^{(k)}\in\CC$ the components of $|a_i\rangle$, and similarly let $b_j^{(k)}\in\CC$ the components of $|b_j\rangle$, where $k=1,2,\ldots,D^2$.
We now define the $(2D^2)$-dimensional real vectors $\vec a_i =( \mathrm{Re}\, a_i^{(1)}, \mathrm{Im}\, a_i^{(1)}, \mathrm{Re}\, a_i^{(2)}, \mathrm{Im}\, a_i^{(2)},\ldots, \mathrm{Re}\, a_i^{(D^2)}, \mathrm{Im}\, a_i^{(D^2)})$, and
$\vec b_j =( \mathrm{Re}\, b_j^{(1)}, \mathrm{Im}\, b_j^{(1)}, \mathrm{Re}\, b_j^{(2)}, \mathrm{Im}\, b_j^{(2)},\ldots, \mathrm{Re}\, b_j^{(D^2)}, \mathrm{Im}\, b_j^{(D^2)})$.
Then $\vec a_i \cdot \vec a_i = \vec b_j \cdot \vec b_j =1$ and consequently $\langle \alpha_i \beta_j \rangle_{\psi} = \vec a_i \cdot \vec b_j $ (because $\langle a_i|b_j\rangle$ is real) for all $1\le i\le m_A$ and $1\le j\le m_B$. Further, since we have $m_A+m_B$ number of vectors entering in the dot products it is sufficient to consider an Euclidean space with dimension $m_A+m_B$ containing vectors $\vec a_i$, $\vec b_j$.
\end{proof}
The converse of Lemma~\ref{lemma:first} is also true:

\begin{lemma}
Let $\{\vec a_i\}_{i=1}^{m_A}$ and $\{\vec b_j\}_{j=1}^{m_B}$ be sets of unit vectors in $\RR^n$. Let ${\rm d}=2^{\lfloor n/2\rfloor}$ and $|\Phi^+\rangle=1/\sqrt{{d}} \sum_{i=1}^{{d}}{|ii\rangle}$. Then there are observables $A_1,\ldots, A_{m_A}$ and $B_1,\ldots, B_{m_B}$ on $\CC^{{d}}$ such that

\begin{align}
\langle \alpha_i\rangle_{\Phi^+} &= \langle \Phi^+|A_i \otimes \one|\Phi^+\rangle = 0 \\
\langle \beta_j\rangle_{\Phi^+} &= \langle \Phi^+|\one \otimes B_j|\Phi^+\rangle = 0 \\
\langle \alpha_i \beta_j\rangle_{\Phi^+} &= \langle \Phi^+|A_i \otimes B_j|\Phi^+\rangle = \vec{\vphantom{b} a}_i \cdot \vec b_j
\label{full}
\end{align}
for all $1\le i\le m_A$ and $1\le j\le m_B$.
\label{lemma:second}
\end{lemma}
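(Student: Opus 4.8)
The plan is to establish this converse to Tsirelson's theorem constructively, by exhibiting an explicit representation of the Clifford algebra on $\CC^{d}$ and then exploiting two elementary identities obeyed by the maximally entangled state. Throughout, let $a_i^{(l)}$ and $b_j^{(l)}$ ($l=1,\ldots,n$) denote the components of the prescribed unit vectors $\vec a_i$ and $\vec b_j$.

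First I would fix a family of $2\lfloor n/2\rfloor+1$ Hermitian matrices $\gamma_1,\ldots,\gamma_{2\lfloor n/2\rfloor+1}$ acting on $\CC^{d}$, $d=2^{\lfloor n/2\rfloor}$, that obey the Clifford relations $\gamma_l\gamma_m+\gamma_m\gamma_l=2\delta_{lm}\one$. Writing $\CC^{d}=(\CC^{2})^{\otimes\lfloor n/2\rfloor}$, a convenient choice is
\begin{equation}
\label{gamma}
\begin{aligned}
\gamma_{2k-1}&=\sigma_z^{\otimes(k-1)}\otimes\sigma_x\otimes\one^{\otimes(\lfloor n/2\rfloor-k)},\\
\gamma_{2k}&=\sigma_z^{\otimes(k-1)}\otimes\sigma_y\otimes\one^{\otimes(\lfloor n/2\rfloor-k)},
\end{aligned}
\end{equation}
for $1\le k\le\lfloor n/2\rfloor$, supplemented by $\gamma_{2\lfloor n/2\rfloor+1}=\sigma_z^{\otimes\lfloor n/2\rfloor}$. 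That these matrices are Hermitian, square to $\one$, and pairwise anticommute follows by an elementary slot-by-slot check using $\sigma_x\sigma_y=-\sigma_y\sigma_x$, $\sigma_z\sigma_x=-\sigma_x\sigma_z$, $\sigma_z\sigma_y=-\sigma_y\sigma_z$ and $\sigma_i^{2}=\one$. Two consequences I would record are $\tr(\gamma_l\gamma_m)=d\,\delta_{lm}$ and, whenever $n\ge 2$ so that at least two $\gamma$'s exist, the tracelessness of each $\gamma_l$: for any $m\neq l$, cyclicity of the trace together with anticommutation gives $\tr\gamma_l=\tr(\gamma_m^{2}\gamma_l)=\tr(\gamma_m\gamma_l\gamma_m)=-\tr(\gamma_l\gamma_m^{2})=-\tr\gamma_l$, hence $\tr\gamma_l=0$. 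Because $2\lfloor n/2\rfloor+1\ge n$ for both even and odd $n$, the first $n$ of these matrices are at hand, which is precisely why the local dimension $d=2^{\lfloor n/2\rfloor}$ is enough.

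Next I would define the candidate observables
\begin{equation}
A_i=\sum_{l=1}^{n}a_i^{(l)}\gamma_l,\qquad B_j=\sum_{l=1}^{n}b_j^{(l)}\gamma_l^{t},
\end{equation}
which are Hermitian ($a_i^{(l)},b_j^{(l)}$ are real and each $\gamma_l^{t}$ is again Hermitian) and which satisfy $A_i^{2}=\sum_{l,m}a_i^{(l)}a_i^{(m)}\gamma_l\gamma_m=|\vec a_i|^{2}\one=\one$, the off-diagonal terms cancelling in pairs by anticommutation; likewise $B_j^{2}=\one$, since $\{\gamma_l^{t}\}$ obeys the same Clifford relations as $\{\gamma_l\}$. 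Thus the $A_i$ and $B_j$ are legitimate $\pm1$ observables on $\CC^{d}$. To evaluate the correlations I would invoke the standard identities $(\one\otimes M)|\Phi^{+}\rangle=(M^{t}\otimes\one)|\Phi^{+}\rangle$ and $\langle\Phi^{+}|(C\otimes\one)|\Phi^{+}\rangle=\frac{1}{d}\tr C$, valid for $|\Phi^{+}\rangle=\frac{1}{\sqrt{d}}\sum_{i=1}^{d}|ii\rangle$ and arbitrary $d\times d$ matrices $M,C$. Combining them with $B_j^{t}=\sum_l b_j^{(l)}\gamma_l$ gives
\begin{equation}
\begin{aligned}
\langle\Phi^{+}|A_i\otimes B_j|\Phi^{+}\rangle
&=\langle\Phi^{+}|(A_iB_j^{t}\otimes\one)|\Phi^{+}\rangle\\
&=\frac{1}{d}\tr(A_iB_j^{t})
=\frac{1}{d}\sum_{l,m}a_i^{(l)}b_j^{(m)}\tr(\gamma_l\gamma_m)\\
&=\vec a_i\cdot\vec b_j,
\end{aligned}
\end{equation}
where the last step uses $\tr(\gamma_l\gamma_m)=d\,\delta_{lm}$. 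The same computation applied to $A_i\otimes\one$ and to $\one\otimes B_j$ yields $\langle\Phi^{+}|A_i\otimes\one|\Phi^{+}\rangle=\frac{1}{d}\tr A_i=0$ and $\langle\Phi^{+}|\one\otimes B_j|\Phi^{+}\rangle=\frac{1}{d}\tr B_j^{t}=0$ from the tracelessness of the $\gamma_l$, which proves the three asserted relations.

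The only genuinely delicate part is the bookkeeping in the Clifford-algebra step: one must confirm that the recursion in Eq.~(\ref{gamma}) really produces $2\lfloor n/2\rfloor+1$ mutually anticommuting Hermitian involutions, so that $n$ of them are available for both even and odd $n$, while the dimension stays exactly $d=2^{\lfloor n/2\rfloor}$, which is the minimal dimension carrying that many anticommuting involutions and hence makes the statement sharp. Everything downstream of that is a routine trace manipulation on the maximally entangled state. For $n=5$ and $d=4$ the construction in Eq.~(\ref{gamma}) reduces precisely to the five Dirac matrices of Eqs.~(\ref{gammaspec}), the instance used in Sec.~\ref{analytic} for the Bell inequality $Z_6$; taking instead the first $n=4$ of these matrices reproduces the observables employed there for $X_4$ and $Y_4$.
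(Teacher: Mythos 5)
Your proposal is correct and follows essentially the same route as the paper: the explicit Pauli-tensor-product representation of the Clifford algebra on $\CC^{d}$ with $d=2^{\lfloor n/2\rfloor}$, the observables $A_i=\sum_l a_i^{(l)}\gamma_l$ and $B_j=\sum_l b_j^{(l)}\gamma_l^{t}$, the evaluation of $\langle\Phi^+|A_i\otimes B_j|\Phi^+\rangle=\tr(A_iB_j^{t})/d=\vec a_i\cdot\vec b_j$ via $\tr(\gamma_l\gamma_m)=d\,\delta_{lm}$, and vanishing marginals from tracelessness of the $\gamma_l$. The only additions are your explicit verification of tracelessness and of the transpose identity for $|\Phi^+\rangle$, which the paper takes for granted.
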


\begin{proof}
Let us calculate $\langle \alpha_i \beta_j\rangle_{\Phi^+} = \tr (A_i B_j^t)/{d}$,  where $t$ denotes transposition. Introduce a set of $n$ anti-commuting matrices, $\gamma_i$, $1\le i\le n$ which constitutes a $({d}=2^{\lfloor n/2\rfloor})$-dimensional representation of the Clifford Algebra,
\begin{equation}
\{\gamma_i, \gamma_j\}=2 \delta_{ij}.
\end{equation}
Let us construct this representation by tensor product of $\lfloor n/2\rfloor$ Pauli matrices. Depending on the number $n$, ($n=2k$, $n=2k+1$) we have the following gamma matrices
\begin{align}
\gamma_1 & = \sigma_x\otimes\one\otimes\cdots\otimes\one\otimes\one \nonumber\\
\gamma_2 & = \sigma_y\otimes\one\otimes\cdots\otimes\one\otimes\one \nonumber\\
\gamma_3 & = \sigma_z\otimes\sigma_x\otimes\cdots\otimes\one\otimes\one \nonumber\\
\gamma_4 & = \sigma_z\otimes\sigma_y\otimes\cdots\otimes\one\otimes\one \nonumber\\
         & \vdots \nonumber \\
\gamma_{2k-3} & = \sigma_z\otimes\sigma_z\otimes\cdots\otimes\sigma_x\otimes\one \nonumber\\
\gamma_{2k-2} & = \sigma_z\otimes\sigma_z\otimes\cdots\otimes\sigma_y\otimes\one \nonumber\\
\gamma_{2k-1} & = \sigma_z\otimes\sigma_z\otimes\cdots\otimes\sigma_z\otimes\sigma_x \nonumber\\
\gamma_{2k} & = \sigma_z\otimes\sigma_z\otimes\cdots\otimes\sigma_z\otimes\sigma_y \nonumber\\
\gamma_{2k+1} & = \sigma_z\otimes\sigma_z\otimes\cdots\otimes\sigma_z\otimes\sigma_z.
\label{gamma}
\end{align}
It can be easily checked that any of two gamma matrices anti-commute, while the square of any one is an identity matrix. Thus we can form a basis from them satisfying $\tr(\gamma_i \gamma_j)=d\delta_{ij}$. Let $A_i=\sum_{k=1}^n{a_i^{(k)}\gamma_k}$ and $B_j=\sum_{k=1}^n{b_j^{(k)}\gamma_k^t}$, which define the observables $A_i$ and $B_j$, where $t$ denotes transposition and $a_i^{(k)}$, $b_j^{(k)}$ denote the elements of the unit vectors $\vec a_i$ and $\vec b_j$, respectively. Squaring the above definitions one obtains $A_i^2= \sum_{k=1}^n{\left(a_k^{(i)}\right)^2 \gamma_k^2=\one}$ and $B_j^2= \sum_{k=1}^n{\left(b_k^{(i)}\right)^2 (\gamma_k^t)^2=\one}$. Thus $A_i$ and $B_j$ are indeed observables. Further, $\tr \left(A_i B_j^t\right) = d\sum_{k=1}^n{a_i^{(k)}b_j^{(k)}}$. This implies that $\langle \alpha_i \beta_j \rangle_{\Phi^+} = \sum_{k=1}^n{a_i^{(k)}b_j^{(k)}} = \vec{\vphantom{b} a}_i \cdot \vec b_j$, where $\vec a_i$ and $\vec b_j$ are unit vectors. Finally, $\langle \alpha_i\rangle_{\Phi^+}=\tr A_i =0 $ and $\langle \beta_j\rangle_{\Phi^+}=\tr B_j =0 $ due to the traceless gamma matrices.
\end{proof}

\end{document}